\documentclass{article} % For LaTeX2e
\usepackage{nips14submit_e,times}
\usepackage{moreverb,url}
\usepackage[pdftex]{graphicx}
\usepackage{amsmath}
\usepackage{amssymb}
\usepackage[justification=centering]{caption}
\usepackage{graphicx}

\title{Artificial Bee Colony-based Adaptive Position Control of Electrohydraulic Servo Systems with Parameter Uncertainty}

\author{
Babajide O. Ayinde \\
Electrical and Computer Engineering\\
University of Louisville\\
Louisville, KY 40218 \\
\texttt{babajide.ayinde@louisville.edu} \\
\And
Sami El-Ferik\\
Department of Systems Engineering \\
King Fahd University of Petroleum and Minerals \\
Dhahran, 31261, Saudi Arabia\\
}

\newtheorem{theorem}{Theorem}
\newtheorem{proof}{Proof}
\newtheorem{remark}{Remark}
\newtheorem{assumption}{Assumption}

\nipsfinalcopy % Uncomment for camera-ready version

\begin{document}

\maketitle

\begin{abstract}
In this paper, a robust adaptive backstepping-based controller is developed for positioning the spool valve of Electro-Hydraulic Servo System (EHSS) with parameter fluctuations. Artificial Bee Colony (ABC) algorithm is utilized to drive the parameters of the proposed controller to a good neighborhood in the solution space. The optimization problem is formulated such that both the tracking error and control signal are minimized concurrently. The results show that the proposed controller guarantees the uniform ultimate boundedness of the tracking error and control signal. Moreover, illustrative simulations validate the potentials and robustness of the proposed schemes in the presence of uncertainties. The proposed controller is also compared with sliding mode control.
\end{abstract}

\textbf{Keywords:} Electro-hydraulic systems, backstepping control, nonlinear systems, adaptive control, artificial bee colony

\section{Introduction}
The need for fast and powerful responses in many industrial applications has made the Electro-Hydraulic Servo Systems (EHSS) very popular. Although EHSS are very useful in many industrial applications, ranging from aerospace flight control to manufacturing. In seismic applications, EHSS is one of the major components in Vibroseis \cite{mintsa_feedback_2012}. However, all the aforementioned applications demand high precision control of the system's output.
\\
\indent
Many factors such as fluid inflow-outflow in the servo valve and friction on actuator-valve moving parts contribute to the nonlinearity in the system dynamics. Also, phenomena such as entrapment of air inside the hydraulic valve, parameter variations (due to temperature changes), unknown model errors and perturbations elevate the complexity of controller design. Many methodologies have been developed to address this challenging task. In spite of all the nonlinear behaviors of the EHSS, linear control theories have been used for developing reasonably working controllers. One drawback of using this approach is that linear control analysis breaks down with changing operating conditions and uncertainties. The performance of linear controller was enhanced using a feedback-feedforward iterative learning controller \cite{wang_high_2012}, and the problem of parameters variation was addressed using adaptive schemes \cite{yanada_adaptive_2007,zhang_adaptive_2009}.\\
\indent
Even though adaptivity can greatly handle the problem of parameter variation, traditional adaptive schemes suffer a serious setback especially when system to be controlled is not linear in parameters and/or if the dynamics has uncertain or unmodeled part. Unfortunately, EHSS falls under the category of such systems with parameter variations and uncertain dynamics. For instance, EHSS parameters such as supply pressure and mass of the load/piston varies with surrounding conditions. Lyapunov approach was used in \cite{mintsa_feedback_2012} to design an enhanced feedback linearization-based controller for EHSS with supply pressure uncertainties, however, the effect of unknown disturbance at level of the velocity was not taken into account. This ultimately narrows the scope of the design.\\
\indent
Backstepping is one of the design strategies that employs a progressive approach to formulate the control law \cite{rozali2013asymptotic} and it is generally used for designing stabilizing controls for some class of dynamical nonlinear systems. This method is developed by inserting new variables that depends on the state variables, controlling parameters and the stabilizing functions. The essence of this stabilizing function is to redress any nonlinearity that can impede the stability of the system. Due to the stepwise nature of the control design, formulation of the controller generally starts with a well known stable system. Subsequently, virtual controllers are employed to stabilize the outer subsystems progressively. The process of using the virtual controllers in the stabilization of the subsystem continues until the external control is accessible. In fact, it has been shown that backstepping technique can be used to force nonlinear systems to behave like a linear system transformed into a new set of coordinates.\\
\indent
One of the numerous advantages of using backstepping technique to design a controller is its ability to avoid useful nonlinearity cancelation \cite{ayinde2015backstepping,el2015backstepping}. The objective of backstepping gravitates towards stabilization and tracking in contrast with its corresponding feedback linearization method. This facilitates controller designs for perturbed nonlinear system even if the perturbation is nowhere around the equation containing the input. Backstepping method is generally used for tracking and regulation problems \cite{rozali2013asymptotic}. In \cite{dashti2014neural}, backstepping-based neural adaptive technique was used for velocity control of EHSS with internal friction, flow nonlinearity, and noise. However, the effect of external disturbance was not considered. More specifically, EHSS are prone to parameter variations due to temperature change. For instance bulk modulus and viscous friction coefficients are prone to variation due to fluctuations in temperature. Owing to this fact, the need to design a controller that adapts to these changes is of paramount importance.\\
\indent
One important consideration that has to be carefully addressed in many mechanical systems is friction \cite{hutamarn_neuro-fuzzy_2012}. A LuGre model-based adaptive control scheme was proposed to model and estimate the frictional effect \cite{wang_high_2012}. This approach does not only account for frictional effect, but also offers good disturbance rejection and robustness to uncertainties. Variable structure controllers have also been used to model both friction and load as external disturbances \cite{bonchis_variable_2001}. Feedback controller with auto-disturbance rejection has also been shown to control the position of EHSS with both internal and external disturbances \cite{xiang_electro-hydraulic_2012}. System model identification technique has also been used to estimate the model of EHSS and adaptive Fuzzy PID controller was developed for position control \cite{shao_model_2009}. The problem of external load variation and coulomb friction have been mitigated using a nonlinear adaptive feedback linearization position control scheme with load disturbance rejection and friction compensation \cite{Mintsa_2011}. Dynamic particle swarm optimization-based algorithm has been proposed to optimize the control parameters and improve the tracking performance of the closed loop system \cite{lai_synchronization_2013}.
\\
\indent
Evolutionary techniques are gaining popularity among the researchers in the last few decades due to their ability to localize global optima, or in some cases, good local optima \cite{hashim2015fuzzy,hashim2015fuzzy2}. In this work, Artificial Bee Colony (ABC) \cite{Karaboga_2005,Karaboga_2007} is proposed to tune the parameters of the controller for optimality. Significance of ABC have been shown in path planning of multi robot and other control applications \cite{bhattacharjee_multi-robot_2011}.\\
\indent
This paper proposes a backstepping-based approach to design a robust adaptive controller for a highly nonlinear EHSS - a single input single output (SISO) system. The EHSS consists of a four-way spool valve supplying a double effect linear cylinder with a double rodded piston. The piston drives a load modeled by mass, spring and a sliding viscous friction. Under the proposed adaptive controller, a practical stability of the closed loop system is ensured and the uniform ultimate boundedness of the tracking error is guaranteed. ABC is used to tune the controller parameters by minimizing both the tracking error and control signal. The rest of this paper is organized as follows: Section II presents the formulation of the problem and section III presents the adaptive controller design. Section IV discusses the ABC optimization and its implementation for the proposed adaptive backstepping controller. Section V discusses the simulation results and finally, section VI concludes the paper.
\section{System Model and Problem Formulation}
The dynamics of the EHSS are highly nonlinear due to many factors such as friction on the actuators, fluid inflow and outflow in the valve. Moreover, entrapment of air and several other phenomena such as parameter variations, unknown model errors aggravates the complexity of the control design. The EHSS model in Fig.~\ref{Fig1} with dynamics given below in \ref{MyEq1} is considered.\\
\begin{figure}
  \centering
  \includegraphics[width=8cm]{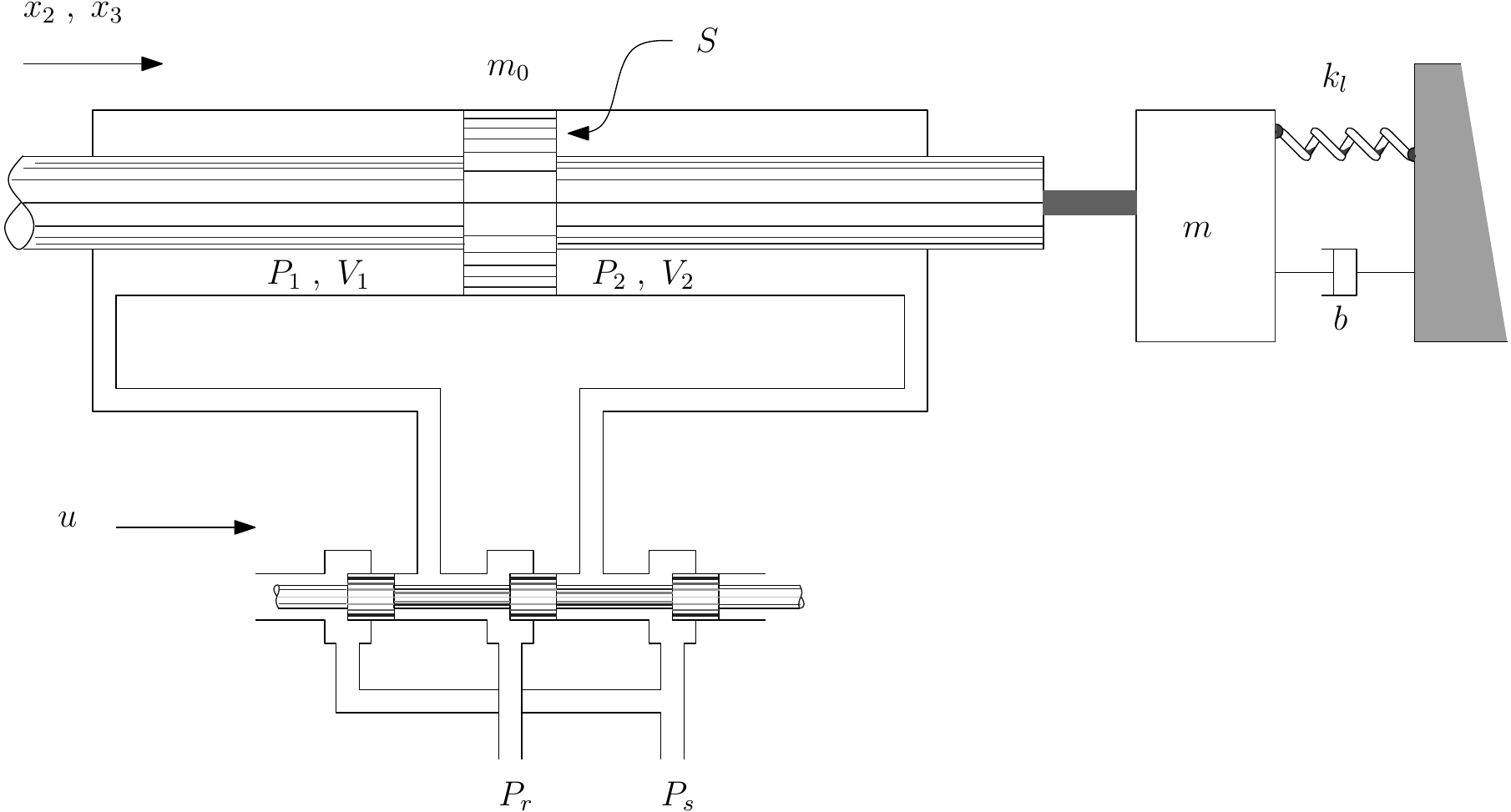}
  \caption{Electro-hydraulic System}\label{Fig1}
\end{figure}

\begin{equation} \label{MyEq1}
  \begin{split}
  {\dot x}_1 &= \tfrac{4B}{V_t}(ku\sqrt{P_d-\text{sign}(u)x_1}-\tfrac{\alpha x_1}{1+\gamma \lvert u\rvert}-Sx_2 ),\\
  {\dot x}_2 &= \tfrac{1}{m_t}(Sx_1-bx_2-\beta x_3),\\
  {\dot x}_3 &=x_2+d(t).
  \end{split}
\end{equation}

where $\beta=(k_l+\Delta\,k_l)$. $x_1$ is the differential pressure between the two chambers, $x_2$ and $x_3$ are the velocity and position of the rod respectively. $k_l+\Delta\ k_l$ denotes the uncertain spring stiffness and $b$ is the viscous damping coefficient, $V_t$ is the total volume of the forward and return chambers, $P_d$ is the supply and return pressure difference while $m_t$ is the total mass of the load and piston. $B$ and $S$ are the bulk modulus and net cross-sectional area of one side of the piston respectively. $k$, $\gamma$ , $\alpha$ are intrinsic constants of the servo valve; $\gamma$ and $\alpha$ are used to model the leakage in the servo valve.\\
 In the sequel, we consider the following assumptions hold:
\begin{assumption}
\begin{enumerate}
\item d(t) is an unknown but bounded disturbance with $|d(t)|<d_{max}$.
\item $\Delta k_l$ is unknown and bounded with $|\Delta k_l|< \Delta k_l^{max}$.
\item The dynamics of the spool-valve is assumed fast enough so it can be ignored in the dynamic model.
\item The states of the system are available for controller design.
\item Reference input ($r(t)$) is a known continuously differentiable bounded trajectory.
\end{enumerate}
\end{assumption}
The nonlinearities with respect to the input $u$ in the dynamics of the system makes it challenging to control the output of the system. We are able to circumvent this challenge by designing a backstepping based controller that will drive the position of the rod to a desired reference $r(t)$. In order to use the backstepping method to design the controller, a re-indexing of the states variables is needed to transform the system into its standard strict feedback form.
Let\\
\begin{equation} \label{MyEq2}
  \begin{split}
  \xi_1 &= x_3, \;\;
  \xi_2 \; = x_2, \;\;
  \xi_3 = x_1.
  \end{split}
\end{equation}
 The dynamics of the transformed system is then given in Eq. (\ref{MyEq4})
 \begin{equation} \label{MyEq4}
  \begin{split}
  {\dot \xi}_1 &= \xi_2+d(t),\\
  {\dot \xi}_2 &= \tfrac{1}{m_t}(S \xi_3-b\xi_2-\beta \xi_1),\\
  {\dot \xi}_3 &= \tfrac{4B}{V_t}(ku\sqrt{P_d-\text{sign}(u)\xi_3}-\tfrac{\alpha \xi_3}{1+\gamma \lvert u\rvert}-S\xi_2 ).
  \end{split}
\end{equation}
Let
\begin{equation} \label{MyEq5}
  \begin{split}
  e_1 &=\xi_1-r,\\
  e_2 &=\xi_2-{\dot r},\\
  e_3 &=f(\xi)-{\ddot r},
  \end{split}
\end{equation}
where $f(\xi) = {\dot \xi_2}$, the error dynamics satisfy
\begin{equation} \label{MyEq6}
  \begin{split}
  {\dot e_1} &= e_2 + d,\\
  {\dot e_2} &= e_3,\\
  {\dot e_3} &= (\tfrac{\partial f(\xi)}{\partial \xi}){\dot \xi}-{\dddot r}.
  \end{split}
\end{equation}
\section{Adaptive Control Law Design}
More often than not, EHSS parameters are subject to variations due to temperature rise. Owing to the fact that backstepping controller
 design relies on actual system's parameters, the need arises to design a controller that adapts to these changes. To overcome the problem of variation in parameters, adaptive control schemes are generally employed. In this section, we propose a backstepping-based adaptive technique that is robust to uncertainties in the system's parameters and external disturbance. We assume the parameters of the load (that is $\beta$) and b are unknown nonlinear functions whose parameters will be estimated by the adaptive scheme. The schematic of the proposed backstepping-based adaptive strategy is shown in Fig~\ref{adaptivecontrol} below.
\begin{figure}[t]
  \centering
  \captionsetup{justification=centering,margin=2cm}
  \includegraphics[width=13cm, height=6cm]{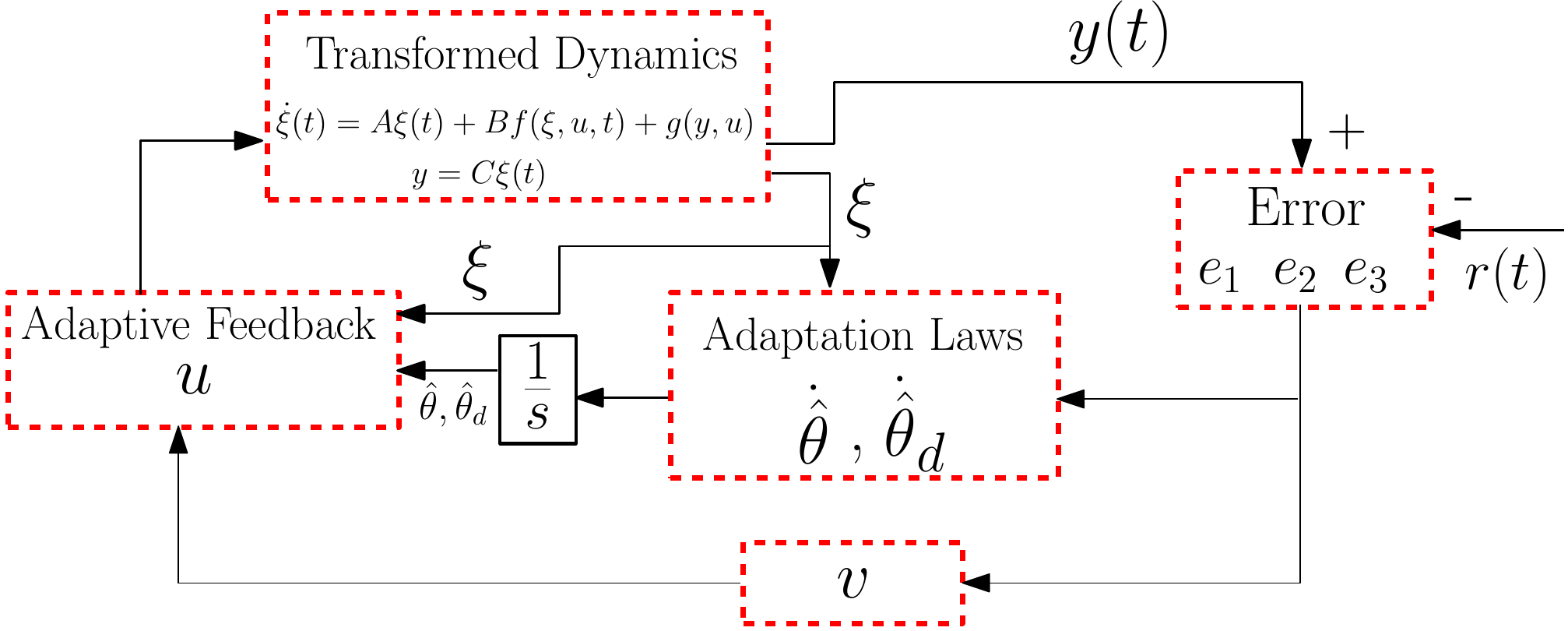}
  \caption{Adaptive Control Scheme}
  \label{adaptivecontrol}
\end{figure}

We also considered a scenario whereby a more complicated vibrator-ground model can result due to the non-ideal contact stiffness that may exist at the boundary interaction between the vibrator's baseplate and ground as depicted in Fig.~\ref{vibrator}. In order to achieve this, we replaced $\beta$ and $b$ as given in \eqref{MyEq85}.
\begin{figure}[t]
  \centering
  \includegraphics[width=8cm]{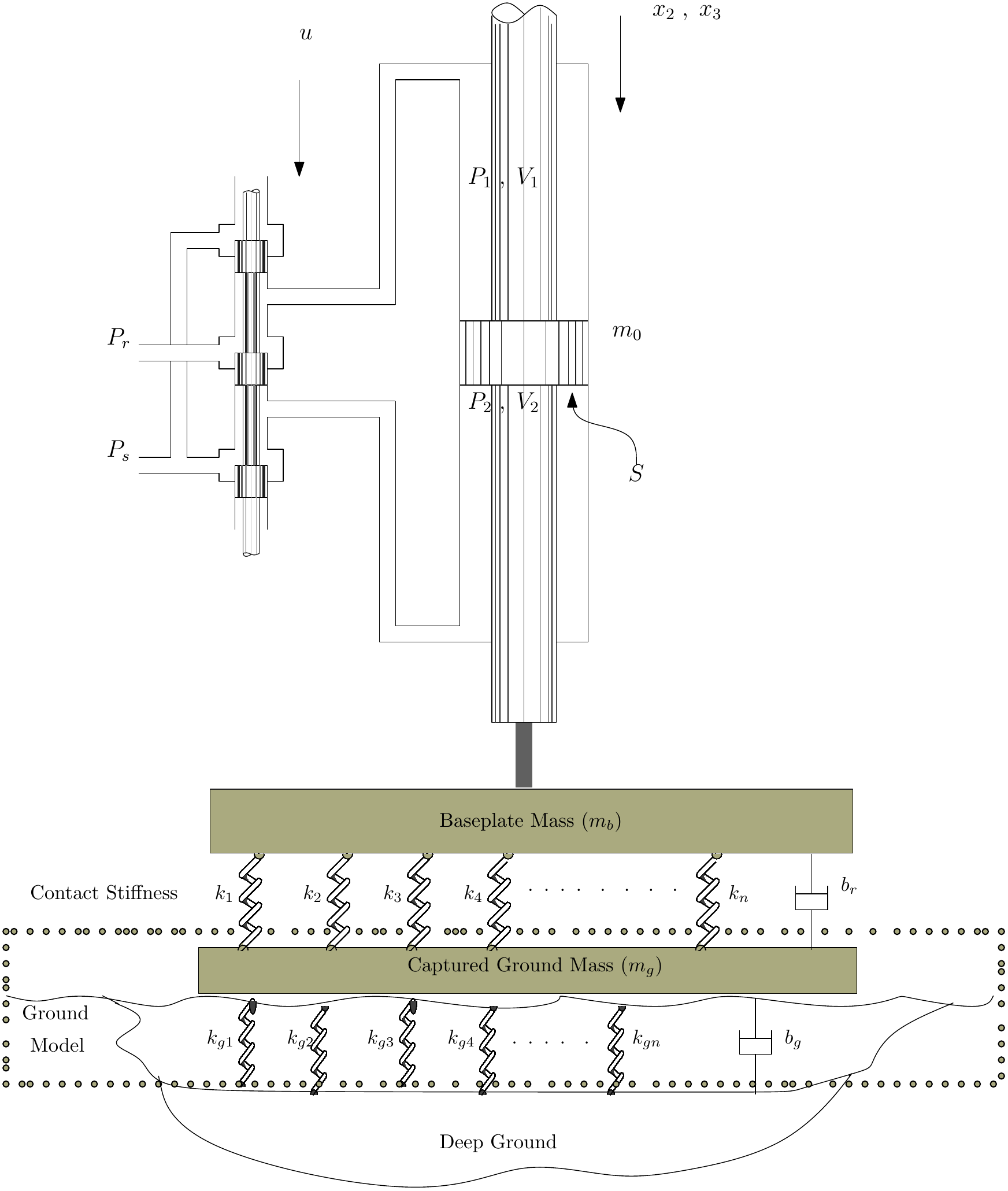}
  \caption{A more detailed vibrator-ground model prototype \cite{wei2010modelling}}
  \label{vibrator}
\end{figure}
\begin{equation} \label{MyEq85}
  \begin{split}
 \beta &=\gamma_3\xi_1^2+\gamma_4\xi_2^2+\gamma_5\xi_3^2 = \theta^T\phi(\xi)\\
 b &= b_0+\Delta f(\xi,b_0)
  \end{split}
\end{equation}
where $\Delta f(\xi,b_0)$ is unknown but bounded nonlinear function that satisfies \eqref{MyEq85a}
\begin{equation} \label{MyEq85a}
  \begin{split}
 \sup_{t \geq0}|\Delta f(\xi,b_0)|\leq F_{max}
  \end{split}
\end{equation}
Now, the error dynamics are given as:
\begin{equation} \label{MyEq87a}
  \begin{split}
  {\dot e_1} &= e_2 + d,\\
  {\dot e_2} &= e_3,\\
  {\dot e_3} &= \tfrac{b_0}{m_t^2}\theta^T\phi(\xi)\xi_1 +(-\tfrac{1}{m_t}\theta^T\phi(\xi)+\tfrac{b_0^2}{m_t^2}-\tfrac{4BS^2}{m_tV_t})\xi_2\\
  &+(-\tfrac{b_0S}{m_t^2}-\tfrac{4BS\alpha}{m_tV_t(1+\gamma \lvert u \rvert)})\xi_3 + \Delta F_1\theta^T\phi(\xi)\xi_1\\
  & +\Delta F_2\xi_2-\Delta F_3\xi_3-\theta_d^T\phi(\xi)d(t)-{\dddot r} + Am(t)u.
  \end{split}
\end{equation}
where, $\theta_d^T\phi(\xi)=\tfrac{\beta}{m_t}$ and $\xi$ is a state vector comprising of $\xi_1$, $\xi_2$ and $\xi_3$. Now, the goal is to design an adaptive feedback such that:
\begin{equation} \label{MyEq74}
  \begin{split}
 \lim_{t\rightarrow\infty} |\xi_1-r(t)| \leq \delta
  \end{split}
\end{equation}
where, $\delta$ is a sufficiently small positive number. The design task is to make $\delta$ as small as possible and at the same time ensuring smooth control law. The following gives the control design schemes.
\begin{equation} \label{MyEq87}
  \begin{split}
  \Delta F_1 &=\tfrac{\Delta f(\xi,b_0)}{m_t^2} \leq F_{max}\\
  \Delta F_2  &= \tfrac{1}{m_t^2}(2b_0\Delta f(\xi,b_0)+\Delta f^2(\xi,b_0)) \leq F^3_{max}\\
  \Delta F_3  &= \tfrac{S}{m_t^2}\Delta f(\xi,b_0) \leq F_{max}
  \end{split}
\end{equation}
\begin{theorem}
Given that
\begin{align} \label{MyEq14}
& \alpha_4= \tfrac{3}{2}+\lambda ,\nonumber \\
& \alpha_5= 1 + \tfrac{1}{\lambda^3}+\tfrac{1}{2 \lambda},\\
& \alpha_6=\tfrac{1}{\lambda^4}\nonumber.
\end{align}
and
\begin{equation} \label{MyEq99}
  \begin{split}
   h(e) &= \alpha_4e_1+\alpha_5e_2+\alpha_6e_3\\
   g(e,\xi) &= e_2+\lambda e_1+\alpha_4e_2+\alpha_5e_3\\
   &+\alpha_6(\tfrac{b_0^2}{m_t^2}-\tfrac{4BS^2}{m_tV_t})\xi_2-\tfrac{b_0S}{m_t^2}\xi_3\\
   A_1 & =\tfrac{\alpha_6b_0}{m_t^2}, \ \ \ A_2=\tfrac{\alpha_6}{m_t} \ \ \  A_3=\tfrac{4BS\alpha\alpha_6}{m_tV_t}\\
   \tilde{\theta} & = \theta -\hat \theta
  \end{split}
\end{equation}
and let the adaptation law be given as
\begin{equation}
  \begin{split}\label{MyEq99aaa}
  {\dot {\hat \theta}} &=\gamma_6(A_1h(e)\phi(\xi)\xi_1-A_2h(e)\phi(\xi)\xi_2\\
                       &+\alpha_6F_{max}|\xi_1|\phi(\xi))\\
  {\dot {\hat \theta}_d} &=-\gamma_7\alpha_6\phi(\xi)d_{max}
  \end{split}
\end{equation}
and let the adaptive feedback be given as
\begin{equation}\label{MyEq2ccc}
  \begin{split}
  u=\bigg(\tfrac{m_tV_t}{4\alpha_3SBk \text{min}(\sqrt{P_d-\xi_3},\sqrt{P_d+\xi_3}}\bigg)v
  \end{split}
\end{equation}
where,
\begin{equation}
  \begin{split}
  v=&-(|g(e,\xi)|+|\alpha_4-\alpha_6{\hat \theta}^T_d\phi(\xi)|d_{max}+\Phi(\xi,\hat \theta)\\
    &+\alpha_6|\xi_3|F_{max}+\alpha_6|\xi_2|F^3_{max})-k_oh(e)
  \end{split}
\end{equation}
Then, system \eqref{MyEq4} under the adaptive feedback control law given in \eqref{MyEq2ccc} is practically stable and the solution of the error dynamic \eqref{MyEq87a} is globally uniformly ultimately bounded with ultimate bound satisfying the following condition
\begin{equation}
  \begin{split}
||e||^2 \leq \frac{d_{max}^2}{\lambda\,\sigma_{min}(\phi \phi^T)}\leq\frac{d_{max}^2}{2 \left(\lambda+\lambda^{5}\right)\,\sigma_{min}(\phi \phi^T)}
\end{split}
\end{equation}
with
\begin{equation} \label{phi2}
  \begin{split}
\phi = \left[ \begin{array}{lll} 1&\lambda&\frac{\alpha_1}{\lambda^2}\\ 0 & 1 & \frac{\alpha_2}{\lambda^2}\\ 0& 0&\frac{\alpha_3}{\lambda^2} \end{array}\right]
\end{split}
\end{equation}
\end{theorem}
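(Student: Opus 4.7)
The plan is to prove the result by a composite Lyapunov argument that mirrors the three-stage backstepping construction the $\alpha$-coefficients encode. First I would verify that the filtered-error signal $h(e)=\alpha_4 e_1+\alpha_5 e_2+\alpha_6 e_3$ is exactly the final backstepping surface that results after introducing virtual controls $\alpha_1(e_1)=-\lambda e_1$, $\alpha_2(e_1,e_2)$, absorbing the unknown $\beta,b,\Delta f$ pieces into the parameter error $\tilde\theta$, and scaling by $1/\lambda^4$ at the last step. I would then take as Lyapunov candidate
\[
V=\tfrac{1}{2}h(e)^2+\tfrac{1}{2\gamma_6}\tilde\theta^{T}\tilde\theta+\tfrac{1}{2\gamma_7}\tilde\theta_d^{T}\tilde\theta_d,
\]
which is positive definite and radially unbounded in the aggregate $(e,\tilde\theta,\tilde\theta_d)$ coordinates.

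Differentiating $V$ along \eqref{MyEq87a} and substituting the feedback \eqref{MyEq2ccc} shows that the gain $m_tV_t/(4\alpha_3 SBk\min(\sqrt{P_d\pm\xi_3}))$ cancels the pressure-dependent plant gain in the $\dot e_3$ channel, so that $\dot h(e)$ reduces to $v+g(e,\xi)+(\alpha_4-\alpha_6\theta_d^{T}\phi)d+\alpha_6\Delta F_1\theta^{T}\phi\xi_1+\alpha_6\Delta F_2\xi_2-\alpha_6\Delta F_3\xi_3$ plus indefinite products of $\tilde\theta$ and $\tilde\theta_d$ with the regressor. Plugging in $v$ subtracts $g(e,\xi)$ exactly and, by triangle inequality, majorizes every disturbance and $\Delta F_i$ cross term using $d_{max}$, $F_{max}$ and $F_{max}^3$ from \eqref{MyEq87}. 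The remaining $\tilde\theta^{T}\phi(A_1\xi_1-A_2\xi_2)h(e)$ and $-\alpha_6\tilde\theta_d^{T}\phi(\xi)d$ pieces are precisely the terms killed on subtracting $\tilde\theta^{T}\dot{\hat\theta}/\gamma_6+\tilde\theta_d^{T}\dot{\hat\theta_d}/\gamma_7$ with the adaptation laws \eqref{MyEq99aaa} (the extra $\alpha_6 F_{max}|\xi_1|\phi(\xi)$ piece in $\dot{\hat\theta}$ absorbs the sign ambiguity coming from $\Delta F_1$). What survives is $\dot V\leq -k_o h(e)^2+c\,d_{max}^2$ after one Young's-inequality step on the disturbance channel.

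The final step is to translate the inequality in the weighted surface $h(e)$ into one in $\|e\|^2$. This is where the matrix \eqref{phi2} enters: the backstepping change of coordinates is the triangular linear map $z=\phi\,e$, so $\|e\|^2\leq \sigma_{\min}(\phi\phi^{T})^{-1}\|z\|^2$. The coefficients $\alpha_4=\tfrac{3}{2}+\lambda$, $\alpha_5=1+\tfrac{1}{\lambda^3}+\tfrac{1}{2\lambda}$, $\alpha_6=\tfrac{1}{\lambda^4}$ are chosen, via repeated applications of Young's inequality on the cross terms arising in the recursive computation of $\dot V$, precisely so that $h(e)^2\geq \lambda\|z\|^2$ in the loose bound and $h(e)^2\geq 2(\lambda+\lambda^5)\|z\|^2$ in the sharp bound, which delivers the two claimed estimates on $\|e\|^2$. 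The UUB conclusion then follows from the standard argument: $\dot V<0$ whenever $\|e\|^2$ exceeds the stated threshold, hence trajectories enter and remain in the corresponding sub-level set, while $\hat\theta$ and $\hat\theta_d$ stay bounded by the monotonicity of $V$ outside the ball.

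I expect the main obstacle to be the bookkeeping in the cross-term step: expanding $\dot h(e)$ generates numerous products of the quadratic regressor $\theta^{T}\phi(\xi)=\gamma_3\xi_1^2+\gamma_4\xi_2^2+\gamma_5\xi_3^2$ with $e$-dependent factors, and one must verify term by term that each is either cancelled by the adaptation law or pointwise dominated by the $F_{max},F_{max}^3,d_{max}$ majorants in \eqref{MyEq87}. Getting the sign pattern right in the $\Delta F_2,\Delta F_3$ terms and in the coefficient $\alpha_4-\alpha_6\hat\theta_d^{T}\phi$ of $d$ is the delicate part; once that algebra is clean the UUB conclusion is a direct application of a textbook Lyapunov theorem.
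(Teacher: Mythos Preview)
Your Lyapunov candidate is the weak point. Taking $V=\tfrac12 h(e)^2+\tfrac{1}{2\gamma_6}\tilde\theta^T\tilde\theta+\tfrac{1}{2\gamma_7}\tilde\theta_d^T\tilde\theta_d$ gives a function that is \emph{not} positive definite in $e$: $h(e)=\alpha_4 e_1+\alpha_5 e_2+\alpha_6 e_3$ is a single linear functional, so $h(e)=0$ on a two-dimensional subspace of error space. Consequently the inequality $\dot V\le -k_o h(e)^2+c\,d_{max}^2$ cannot be converted into a bound on $\|e\|$, and the claimed estimates $h(e)^2\ge \lambda\|z\|^2$ or $h(e)^2\ge 2(\lambda+\lambda^5)\|z\|^2$ are simply false: a scalar square can never dominate a positive-definite quadratic form in three independent coordinates.

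The paper avoids this by taking the composite Lyapunov function $V=V_1+V_2+V_3+V_4$ with
\[
V_1=\tfrac12 e_1^2,\qquad V_2=\tfrac{1}{2\lambda^4}(e_2+\lambda e_1)^2,\qquad V_3=\tfrac12(\alpha_4 e_1+\alpha_5 e_2+\alpha_6 e_3)^2,
\]
plus the parameter-error terms $V_4$. The three quadratic pieces together are a positive-definite form in $e$; in fact $V_1+V_2+V_3=\|z\|^2$ with $z=\phi^T e$ for the triangular $\phi$ in the statement, which is exactly what makes the singular-value bound $\sigma_{\min}(\phi\phi^T)\|e\|^2\le V$ legitimate. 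The specific values of $\alpha_4,\alpha_5,\alpha_6$ are fixed by Young's inequality applied to the cross terms in $\dot V_1$ and $\dot V_2$ (that is where the $\tfrac32+\lambda$, $1+\tfrac{1}{\lambda^3}+\tfrac{1}{2\lambda}$, $\tfrac{1}{\lambda^4}$ come from), after which $\dot V\le -\tfrac{\lambda}{2}e_1^2-(e_2+\lambda e_1)^2-(\alpha_4 e_1+\alpha_5 e_2+\alpha_6 e_3)^2+(\tfrac{1}{2\lambda}+\tfrac{1}{2\lambda^5})d_{max}^2$ and the UUB bound follows. Your description of how the adaptation laws cancel the $\tilde\theta,\tilde\theta_d$ cross terms and how $v$ dominates the $\Delta F_i$ and disturbance channels is correct; you only need to replace $\tfrac12 h(e)^2$ by the full sum $V_1+V_2+V_3$ and redo the $\dot V$ computation so that three negative squares, not one, survive.
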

\begin{proof}
To prove the boundedness of the error dynamics, we again choose the Lyapunov functions as follows:
\begin{equation} \label{MyEq88}
  \begin{split}
  V_1 &= \tfrac{1}{2}e_1^2,\ \ \ V_2=\tfrac{1}{2\lambda^4}(e_2+\lambda e_1)^2\\
   V_3 &=\tfrac{1}{2}(\alpha_4 e_1 + \alpha_5 e_2 + \alpha_6 e_3)^2\\ V_4 &=\tfrac{1}{2\gamma_6}{\tilde{\theta}}^T{\tilde{\theta}}^T+\tfrac{1}{2\gamma_7}{\tilde{\theta_d}}^2,
  \end{split}
\end{equation}
and again using the Young's inequality with  $\lambda >0$ then,
\begin{equation} \label{MyEq103}
  \begin{split}
  {\dot V}= {\dot V_1}+{\dot V_2}+{\dot V_3}+{\dot V_4}
  \end{split}
\end{equation}
Therefore, the derivative of the Lyapunov function is thus given:
\begin{equation} \label{MyEq104}
  \begin{split}
  {\dot V} & \leq  -\tfrac{\lambda}{2} e_1^2+ (\tfrac{1}{2\lambda}+\tfrac{1}{2\lambda^5})d^2-(e_2+ \lambda e_1)^2\\
  &+ h(e)[g(e,\xi)+A_1\tilde{\theta}^T\phi(\xi)\xi_1+A_1{\hat\theta}^T\phi(\xi)\xi_1\\
  &-A_2\tilde{\theta}^T\phi(\xi)\xi_2-A_2{\hat\theta}^T\phi(\xi)\xi_2-\tfrac{A_3}{1+\gamma \lvert u \rvert})\xi_3 \\
  &+ \alpha_4d+  \alpha_6\Delta F_1\tilde{\theta}^T\phi(\xi)\xi_1+\alpha_6\Delta F_1{\hat\theta}^T\phi(\xi)\xi_1\\
  &+\alpha_6\Delta F_2\xi_2-\alpha_6\Delta F_3\xi_3-\alpha_6\tilde{\theta}_d^T\phi(\xi)d(t)\\
  &-\alpha_6{\hat\theta}_d^T\phi(\xi)d(t)-\alpha_6{\dddot r} + \alpha_6Am(t)u]\\
   &-\tfrac{1}{\gamma_6}{\tilde{\theta}}^T{\dot {\hat \theta}}-\tfrac{1}{\gamma_7}\tilde{\theta}_d{\dot {\hat \theta}_d}
  \end{split}
\end{equation}
In order to annihilate the parametric error, the update laws are chosen as in \eqref{MyEq99aaa}
Given that $F_{max}=\sup_{t \geq 0}|\Delta F_1|$ and $d_{max}= \sup_{t \geq 0}|d(t)|$, and to ensure a uniformly ultimately bounded tracking error, adaptive feedback $``u"$ is chosen as given in \eqref{MyEq2ccc} and $v$ is given as follows:
\begin{equation} \label{MyEq110}
  \begin{split}
  v&=-(|g(e,\xi)|+|\alpha_4-\alpha_6{\hat \theta}^T_d\phi(\xi)|d_{max}+\Phi(\xi,\hat \theta)\\
  &+\alpha_6|\xi_3|F_{max}+\alpha_6|\xi_2|F^3_{max})-k_oh(e)
  \end{split}
\end{equation}
Ultimately,
\begin{equation} \label{MyEq111}
  \begin{split}
{\dot V}  &\leq -\tfrac{\lambda}{2} e_1^2+ (\tfrac{1}{2\lambda}+\tfrac{1}{2\lambda^5})d^2-(e_2+ \lambda e_1)^2\\
&-(\alpha_1 e_1 + \alpha_2 e_2 + \alpha_3 e_3)^2
  \end{split}
\end{equation}
Therefore,for ${\dot V} \leq 0$, it is sufficient to verify that $-V  +\tfrac{1}{\lambda}d_{max}^2 \leq 0$. Which means that ${\dot V} \geq 0$ if $e$ is such that $V \leq \tfrac{1}{\lambda}d_{max}^2$. This will lead to increasing $e$ until $V \geq \tfrac{1}{\lambda}d_{max}^2$.\\
Let $z=\phi^T e$, with
\begin{equation} \label{weights1}
  \begin{split}
\phi = \left[ \begin{array}{lll} \frac{1}{\sqrt{2}}&\lambda&\alpha_1\\ 0 & 1 & \alpha_2\\ 0& 0&\alpha_3 \end{array}\right]
\end{split}
\end{equation}
Using (\ref{weights1}), $V$ can be rewritten as $V=z^T\,z=||z||^2$. On the other hand,
\begin{equation} \label{weights2}
  \begin{split}
\sigma_{min}(\phi \phi^T)||e||^2\leq V=||z||^2 \leq \sigma_{max}(\phi \phi^T)||e||^2
\end{split}
\end{equation}
where $\sigma_{min}(\phi \phi^T)$ and $\sigma_{max}(\phi \phi^T)$ represent the max and min singular values of $\phi \phi^T$ respectively.
Since the Lyapunov function satisfies \eqref{MyEq111} for all $t \in \Re_{\geq0}$, this implies that the whole time during which the adaptation take place is finite. During the finite time, the variables $\hat \theta$ and ${\hat \theta}_d$ cannot escape to infinity since the adaptation laws in \eqref{MyEq99aaa} are well defined. For any bounded conditions $e(0)$, ${\hat \theta}(0)$ and ${\hat \theta}_d(0)$ and by making use of \eqref{MyEq111}, we infer that $e$, ${\hat \theta}$ and ${\hat \theta}_d$ are bounded for all $t \in \Re_{\geq0}$. The proof ends here.
\end{proof}
\begin{remark}
The parameter $\lambda$ is a design parameter introduced in Young's inequality. Such parameter should be selected as big as possible to decrease the ultimate bound in the tracking error dynamic. Nevertheless, a trade-off must be made, since the larger the value of $\lambda$ the more oscillatory is the transient and higher is the control input. However, $\epsilon$ has to be chosen sufficiently small and its choice is neither dependent on system's parameters nor the bound of the disturbance.
\end{remark}
The proposed controller has parameters that should be carefully selected for optimal performance of the closed loop system. In this work, the ABC technique is employed for optimal setting of the controller parameters $\lambda$ and $\gamma_1$ based on a preassigned objective function minimization. The proper selection of these parameters will be reflected on minimizing the objective function. The proposed objective function is defined by the error and control signal.
\begin{equation}
  Obj = \sum\limits_{t=0}^{t_{sim}}\big(\Gamma_1 e_1^2(t)+\Gamma_2 u^2(t)\big)
\end{equation}
where the tracking error $e_1(t)=\xi_1-r(t)$ and $u(t)$ is the control signal, $\Gamma_1$ and $\Gamma_2$ are weighting parameters. The controller's parameters are selected within the bounds:
\begin{equation*} \label{MyEq88888}
  \begin{aligned}
   \lambda^{min}  \leq & \lambda \leq \lambda^{max} \\
   \gamma_1^{min} \leq & \gamma_1 \leq \gamma_1^{max} \\
  \end{aligned}
\end{equation*}
\section{ABC Algorithm}
Artificial Bees Colony is a meta-heuristic approach that gained its inspiration from the work proposed in \cite{Karaboga_2005}. The algorithm is motivated from the way life is structured in the colony of natural bees. The bees in the colony are usually divided into three groups: employed, onlooker and scout bees. The primary assignment of the employed bees is to randomly search for food and the best solution of the food is identified as the optimal solution. The employed bees dance in a systematic manner for the purpose of relaying information about the food source and the amount of nectar to other bees in the colony. Onlooker bees differentiate between the good and the bad food sources based on the dance length, dance type and speed of shaking of the employed bees. Onlooker bees also use these information to establish the quality of food. The scout bees are chosen from the onlooker bees prior a new search for food. The onlooker and scout bees may decide to switch roles with the employed bees depending on the quality of food \cite{Karaboga_2007}.\\
\indent
As detailed in \cite{Karaboga_2007}, the employed and onlooker bees are responsible for searching the solution space for the optimal parameters while the scout bees control the search process. The summary of the ABC algorithm is shown in the flowchart of Fig.~\ref{abc1} and the solution of the optimization problem is the position of the food source while the amount of nectar with respect to the quality is termed as the objective function of the optimization procedure. The position of the food source in the search space can be described as follows:
\begin{equation} \label{MyEq15a}
  \begin{split}
  x^{new}_{ij}=x^{old}_{ij}+u(x_{ij}^{old}-x_{kj})
  \end{split}
\end{equation}

\begin{figure}[htb!]
  \centering
  \includegraphics[width=8cm,height= 13cm]{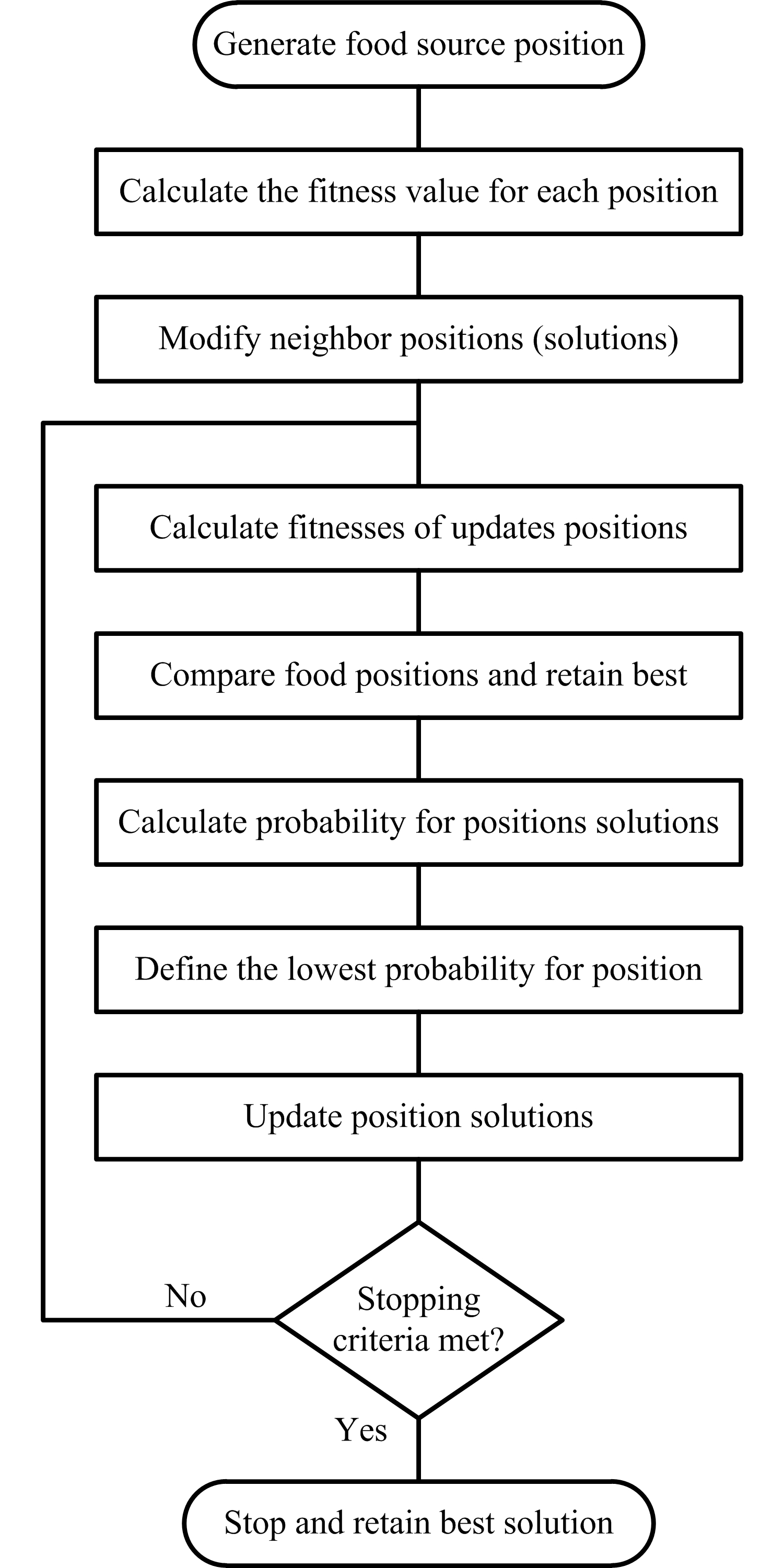}
  \caption{Artificial Bee Colony Algorithm}\label{abc1}
\end{figure}
The probability of onlooker bees for choosing a food source:
\begin{equation} \label{MyEq86}
  \begin{split}
  P_i=\tfrac{fitness_1}{\sum^{E_b}_{i=1}fitness_i}
  \end{split}
\end{equation}
where, $x$ is a candidate solution, $P_i$ is the probability of onlooker solution,  $i=1,2,\cdots,E_b$ is the half of the colony size, $j=1,2,\cdots,D$ and and $j$ is the number of positions with D dimension where $D$ refers to number of parameters to be optimized, $fitness_i$  is the fitness function, $k$ is a random number where $k \in (1,2,\cdots,E_b)$, $u$ is random number between 0 and 1.\\*[.3pc]
\section{Results and Discussions}
The problem is divided into two stages. In the first stage, the proposed backstepping controller (BSC) was implemented without ABC algorithm and in the second stage, ABC algorithm is incorporated with the proposed controller for optimal parameter tuning. The tracking capability of both stages is evaluated using constant, sum of sinusoidal, and sinusoidal reference inputs. The objective function of the optimization procedure is formulated as:
\begin{equation} \label{MyEq888}
  \begin{split}
  Obj= \sum^{20}_{t=0.01}\Gamma_1 e_1^2(t)+\Gamma_2 u^2(t)
  \end{split}
\end{equation}
\begin{equation*} \label{MyEq8888}
  \begin{aligned}
   9 \leq & \lambda \leq 16\\
   10^{-7} \leq & \gamma_1 \leq 10^{-10}\\
  \end{aligned}
\end{equation*}
In the optimization algorithm, the number of parameters to be optimized are $\lambda$ and $\gamma_1$, the population size is selected as 50 and the number of generations is 100, and the search space was constrained as given in \eqref{MyEq888}. The weighted values of $\Gamma_1$ and $\Gamma_1$ were chosen to be 1. For experimental purpose, $F_{max}$ is chosen to be $10$. Adaptive backstepping controller with ABC-based optimizer is deployed on the system model in \eqref{MyEq4}. It must be remarked that this is a minimization task. The parameters of the system's model are given in table {\ref{table:Table2}}\\
\begin{table}[h]
\setlength{\tabcolsep}{1pt}
\caption{Numerical values for simulations} % title of Table
\centering % used for centering table
\small
\begin{tabular}{c c c} % centered columns (3 columns)
\hline\hline %inserts double horizontal lines
Parameters & Value & Units \\ [0.5ex]
%heading
\hline\hline %inserts double horizontal lines
$B$ & $2.2e9$ & Pa \\[1ex] % inserting body of the table
$P_r$ & $1e5$ & Pa\\[1ex] % inserting body of the table
$V_t$ & $1e{-3}$ & $m^3$ \\[1ex]
$S$ & $1.5e{-3}$ & $m^2$ \\[1ex]
$\gamma$ & 8571 & $s^{-1}$ \\[1ex]
$b$ & 590 & $kg~s^{-1}$ \\[1ex]
$\Delta$$k_l$ & 2500 & $N m^{-1}$ \\[1ex]
 $k_l$ & 12500 & $N m^{-1}$ \\[1ex]
$P_s$ & $300e5$ & Pa \\[1ex]
 $m_t$ & 70 & kg\\[1ex]
$k$ & $5.12e{-5}$ & $m^3 s^{-1} A^{-1} Pa^{1/2}$ \\[1ex]
$\alpha$ & $4.1816e-12$ & $m^3 s^{-1} Pa^{-1}$ \\[1ex]
\hline\hline %inserts double horizontal lines
\end{tabular}
\label{table:Table2} % is used to refer this table in the text
\end{table}

\begin{table*}[!t]
 \setlength{\tabcolsep}{3pt}
 \caption{Minimum objective function with optimal parameters.} % title of Table
\centering % used for centering table
\small
\begin{tabular}{|c| c| c| c|c| c| c| c|c|c|} % centered columns (3 columns)
\hline\hline %inserts double horizontal lines
Experiment No &   1   &     2    &   3    &   4  &   5   &   6   &   7  &   8  \\ [0.5ex]
%heading
\hline %inserts double horizontal lines        	   	   				
$Objective$ &   2.1368  &  2.1368   &  2.1368  &  2.1368 & 2.1369  &  2.1368   &  2.1368  &  2.1369   \\[1ex] % inserting body of the table			
\hline
$\gamma_1$ &  $10^{-10}$  &    $10^{-10}$   &   $10^{-10}$    &  $10^{-10}$  &   $10^{-10}$   &   $10^{-10}$   &   $10^{-10}$  &  $10^{-10}$ 	 \\[1ex]
\hline 	   	   	   	   				
$\lambda$&  13.5585   &   13.5580    &  13.5583  &  13.5585  &   13.5580   &   13.5585  &  13.5581 &  13.5589  \\[1ex]	  	  	  	  																					 
\hline\hline %inserts double horizontal lines
\end{tabular} 						
\label{table:result2} % is used to refer this table in the text
\end{table*}
The simulation is implemented for 20 seconds and the values of error and control signal are captured every 0.01 second. In the first set of experiments, the reference input is a step function with $r(t)=0.2$. The simulation was carried out 8 times with different initial populations in order to ensure the robustness of the proposed solution. As shown in Table~\ref{table:result2}, the objective functions were close in all experiments and Fig.~\ref{ABCOUT2} shows the output performance of the proposed BSC with optimized parameters $\gamma_1 \leq 10^{-10}$ and $\lambda \leq 13.5585$. Although steady state error is not equal to zero due to the disturbance ($d(t) = 0.1$), the error and control signal are bounded and the output performance is very close to the reference. The performance of the proposed controller was also compared with sliding mode controller (SMC) and it can be seen that even though the SMC achieved smaller tracking error, there is output chattering even at steady state. However in the case of the proposed controller tuned with ABC, the control signal is smooth and the transient oscillations have been eliminated. ABC was able to find the best compromise solution by simultaneously minimizing the tracking error and output oscillations in both transient and steady state. Also, as shown in Fig.~\ref{ABCOUT3} the control signal is minimized and smooth for ABC-based BSC compared to both SMC and BSC without ABC tuning.
\begin{figure}[!h]
  \centering
  \includegraphics[width=9cm]{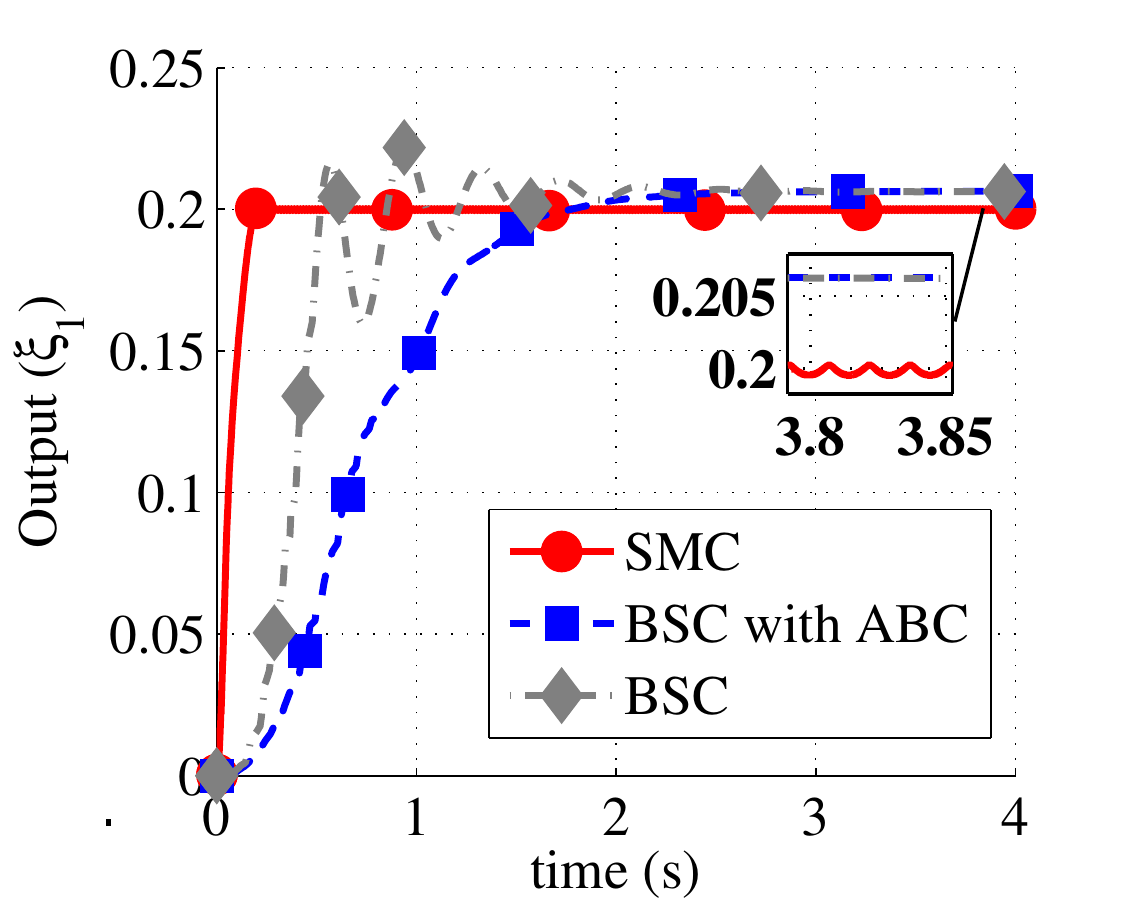}
  \caption{Output performance of adaptive backstepping after ABC optimization with $r = 0.2$m}\label{ABCOUT2}
\end{figure}

\begin{figure}[!h]
  \centering
  \includegraphics[width=9cm]{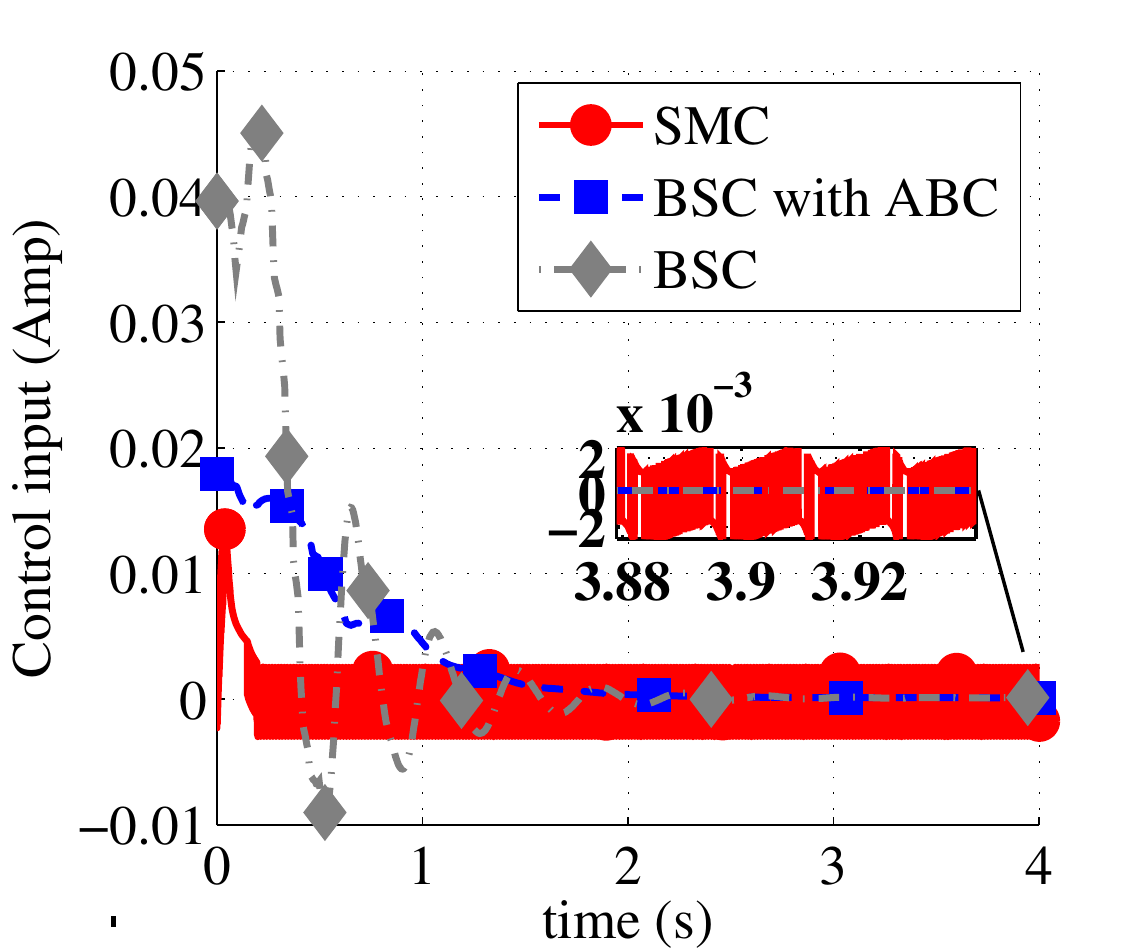}
  \caption{Control input of adaptive backstepping after ABC optimization with $r = 0.2$m}\label{ABCOUT3}
\end{figure}

\begin{figure}[!h]
  \centering
  \includegraphics[width=9cm]{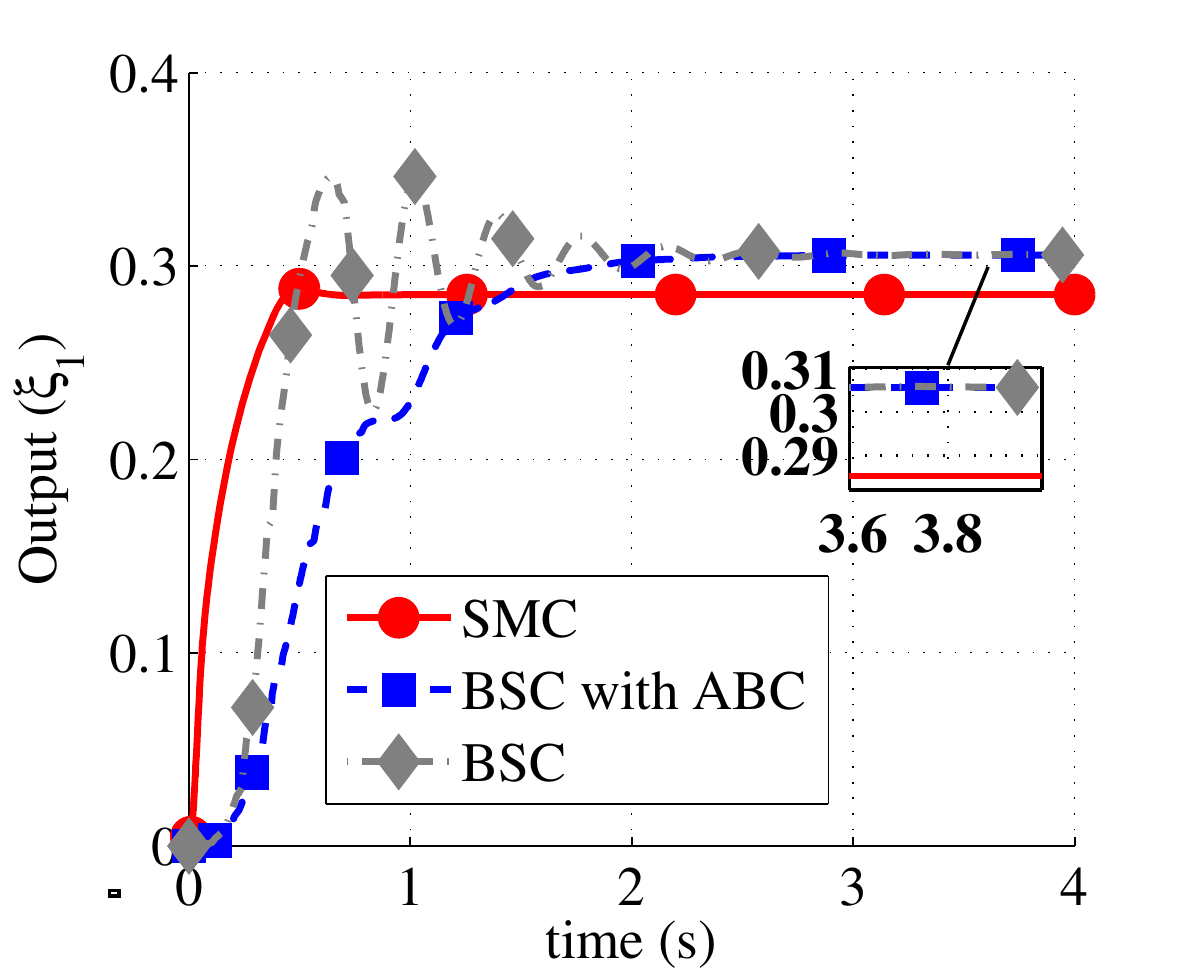}
  \caption{Output performance of adaptive backstepping after ABC optimization with $r = 0.3$m}\label{ABCOUT4}
\end{figure}

\begin{figure}[!h]
  \centering
  \includegraphics[width=9cm]{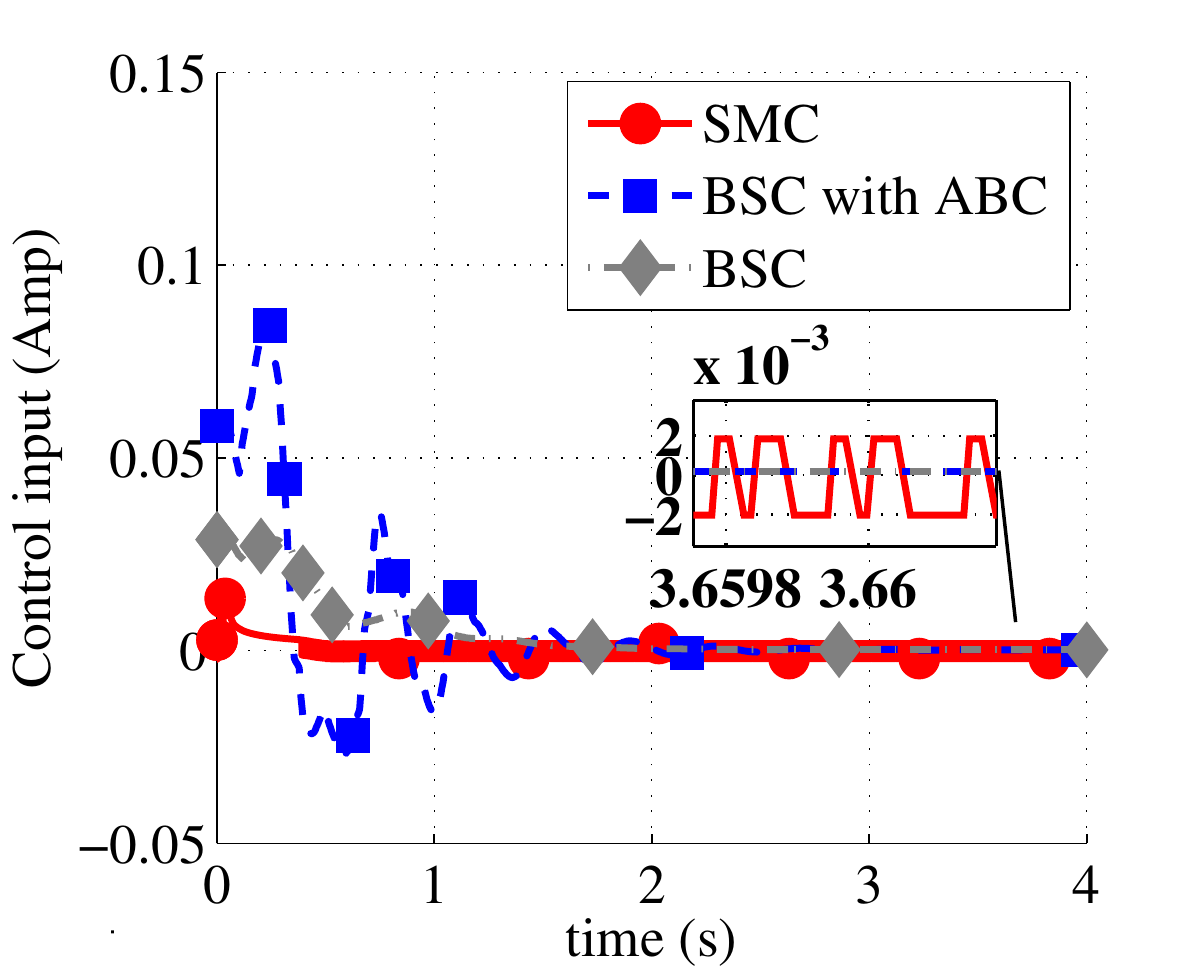}
  \caption{Control input of adaptive backstepping after ABC optimization with $r = 0.3$m}\label{ABCOUT5}
\end{figure}
\begin{figure}[!h]
  \centering
  \includegraphics[width=9cm]{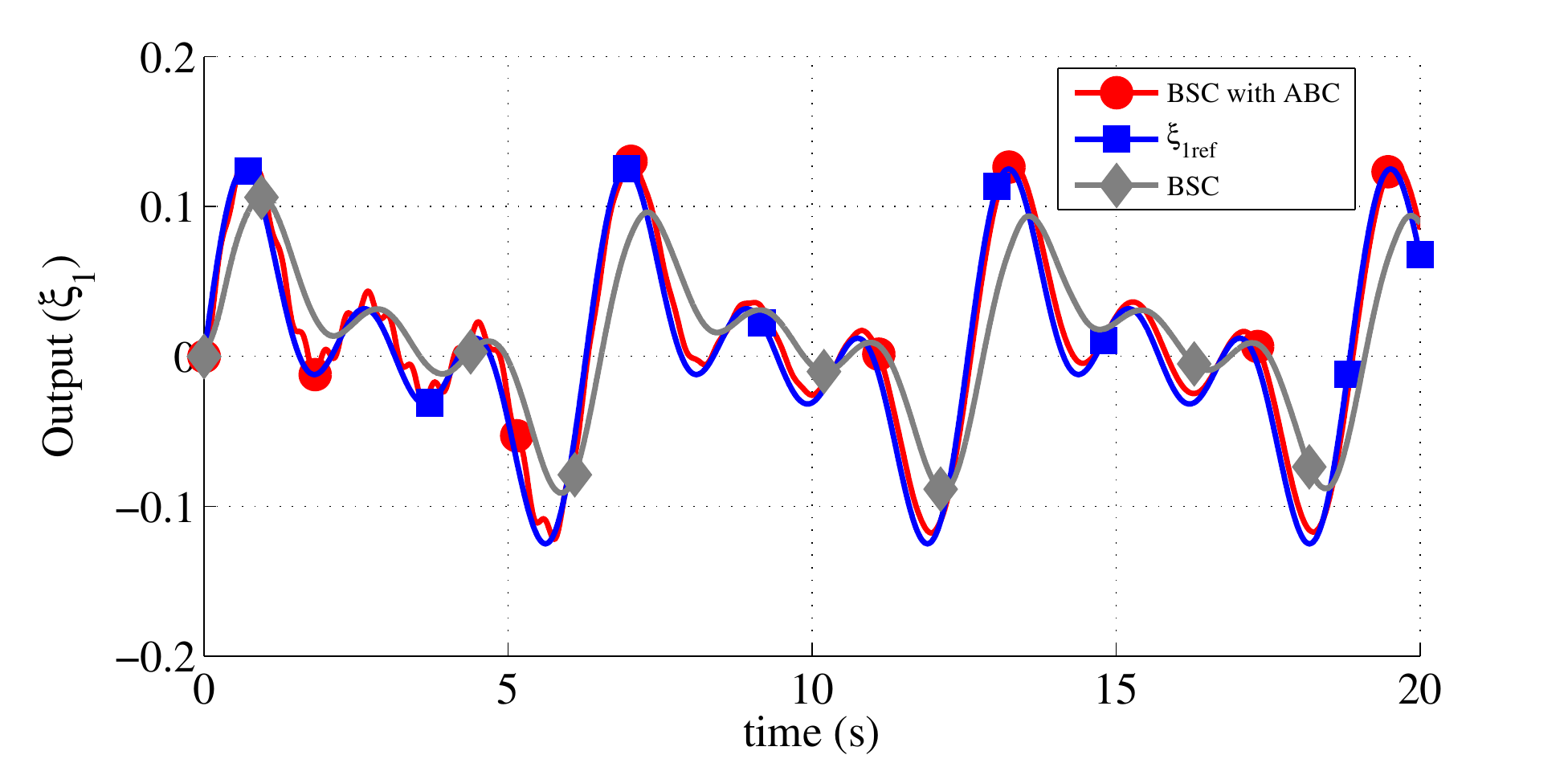}
  \caption{Output performance of adaptive backstepping after ABC optimization with $r=0.05(sin(t)+sin(2t)+sin(3t))$m}\label{ABCOUT6}
\end{figure}
\begin{figure}[!h]
  \centering
  \includegraphics[width=9cm]{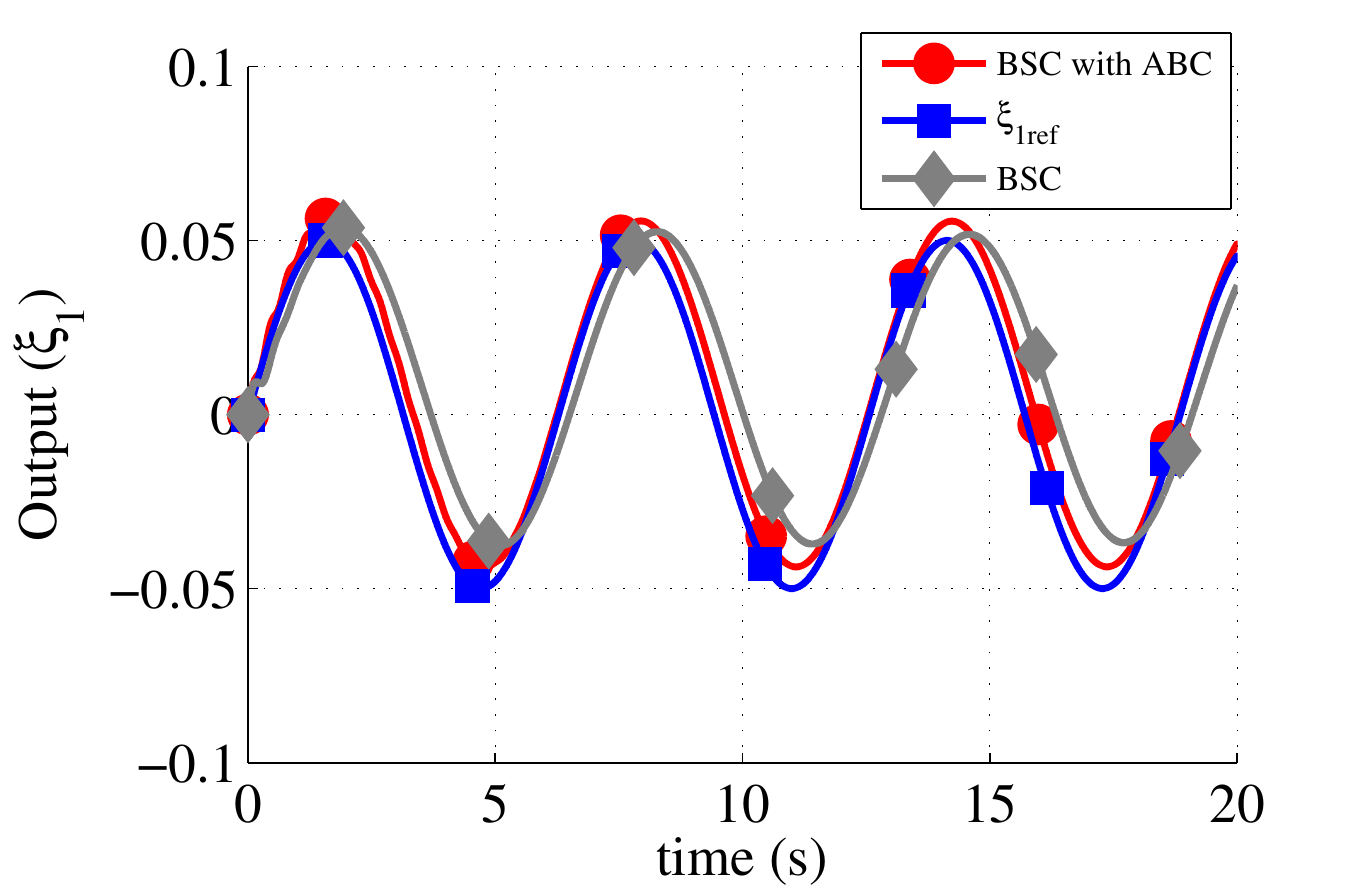}
  \caption{Output performance of adaptive backstepping after ABC optimization with $r=0.05sin(t)$m}\label{ABCOUT7}
\end{figure}
It must be noted that the proposed adaptive design does not involve the differentiation of $m(t)$, which indicates the proposed scheme can handle the effects of various types of slowly time-varying $m(t)$ and $d(t)$. The problem we address is a robust adaptive control issue for EHSS. Using a constant trajectory as seen in Fig.~\ref{ABCOUT2}, the adaptive BSC input accomplishes a bounded error tracking even in the presence of input nonlinearity, parameter uncertainties and unknown but bounded disturbance. Fig.~\ref{ABCOUT4} shows the output performance when the reference was changed to $r(t)=0.3$ and it can be seen that BSC tuned with ABC was able to achieve better tracking accuracy, smooth response, and less transient oscillations than both SMC and BSC. It can again be observed in Fig.~\ref{ABCOUT5} that the SMC control signal is non-smooth and those of BSC are smooth.\\
\indent
Fig.~\ref{ABCOUT6} demonstrates the excellent tracking of the proposed ABC-based BSC for sum of sinusoids reference input compared to BSC without ABC. It is worth mentioning that for sinusoidal references, the SMC went out of stability and for this reason, we could not compare its response with BSC for experiments involving sinusoidal references. Also in Fig.~\ref{ABCOUT7}, excellent tracking accuracy was noticed for ABC-based BSC compared to BSC when the reference was changed to sinusoid. In a nutshell, the proposed solution is robust against parameter uncertainties and disturbance if the ultimate bound satisfies the condition $||e||^2 \leq \frac{d_{max}^2}{2 \left(\lambda+\lambda^{5}\right)\,\sigma_{min}(\phi \phi^T)}$.  It must also be remarked that the results presented in this paper are for the case when non-ideal contact stiffness exists at the boundary interaction between the vibrator's baseplate and ground.  The $\beta$ and $b$ which correspond to load and the frictional parameters have been replaced with nonlinear functions to incorporate more realistic vibrator-ground model. As shown, the tracking error still converges to the neighborhood of the origin at the steady states. This shows the robustness of the backstepping-based adaptive controller.
\section{CONCLUSION}
In this work, a robust backstepping based adaptive controller is proposed for EHSS with uncertain and partially known parameters.
ABC algorithm is incorporated in the closed loop system to optimize the proposed controller's parameter and the adaptation gain while
minimizing the tracking error, control signal and its smoothness. Several experiments have been used to validate the tracking capability of the proposed controller and the robustness of the proposed approach. It is concluded that the proposed control approach ensures uniform ultimate boundedness of the error and control signal.
\bibliographystyle{IEEEtran}
\bibliography{BACKSTEPP}
\end{document}